\newtheorem{observation}{Observation}[section]
\crefname{LP}{LP}{LPs}
\definecolor{crimsonglory}{rgb}{0,0,0}%{0.75, 0.0, 0.2}
 \newtheorem{theorem}{Theorem}[section]
 \newtheorem{lemma}[theorem]{Lemma}
\def\GrabProofArgument[#1]{ #1: \egroup\ignorespaces}
\def\proof{\noindent\textbf\bgroup Proof%
	\@ifnextchar[{\GrabProofArgument}{. \egroup\ignorespaces}}
\newcommand{\shift}{\mathbb{S}}
\newcommand{\wildcard}{\mathsf{?}}
\newcounter{proccnt}
\newcommand{\konote}[1]{}
\title{Quantum Pattern Matching with Wildcards}
\author{
    Masoud Seddighin
	\and Saeed Seddighin
}
\begin{document}
	\newcommand{\ignore}[1]{}
\renewcommand{\theenumi}{(\roman{enumi}).}
\renewcommand{\labelenumi}{\theenumi}
\sloppy
\date{}

\maketitle

\thispagestyle{empty}

\begin{abstract}
Pattern matching is one of the fundamental problems in Computer Science. Both the classic version of the problem as well as the more sophisticated version where wildcards can also appear in the input can be solved in almost linear time $\tilde O(n)$ using the KMP algorithm and Fast Fourier Transform, respectively. In 2000, Ramesh and Vinay~\cite{ramesh2003string} give a quantum algorithm that solves classic pattern matching in sublinear time and asked whether the wildcard problem can also be solved in sublinear time? In this work, we give a quantum algorithm for pattern matching with wildcards that runs in time $\tilde O(\sqrt{n}\sqrt{k})$ when the number of wildcards is bounded by $k$ for $k \geq \sqrt{n}$. This leads to an algorithm that runs in sublinear time as long as the number of wildcards is sublinear.
\end{abstract}
\section{Introduction}
String problems are very well-studied in computer science both in classic and quantum settings~\cite{ramesh2003string,le2023quantum,boroujeni2021approximating,hajiaghayi2019approximating,boroujeni2019improved,rubinstein2023approximation,jin2022quantum,hajiaghayi2021string,clifford2007simple,hajiaghayi2019massively,boroujeni2020improved,boroujeni20191+,seddighin20223+,bathie2024pattern,clifford2010pattern,clifford2011black}. Pattern matching is perhaps the most fundamental string problem with widespread applications in areas such as text processing, bioinformatics, and data analysis. In the quantum computing landscape, it has also emerged as a critical problem for evaluating the potential speedups offered by quantum algorithms over classical approaches. While existing quantum algorithms provide meaningful improvements over classical methods for classic pattern matching, no speedup is known for quantum algorithms when wildcards are allowed to be in the pattern and the text.

Both the classic version of the problem as well as the more sophisticated version where wildcards can also appear in the input can be solved in almost linear time $\tilde O(n)$ using the KMP algorithm~\cite{cormen2022introduction} and Fast Fourier Transform~\cite{clifford2007simple}, respectively. In 2000, Ramesh and Vinay~\cite{ramesh2003string} give a quantum algorithm that solves classic pattern matching in sublinear time and asked whether the wildcard problem can also be solved in sublinear time? In this work, we give a quantum algorithm for pattern matching with wildcards that runs in time $\tilde O(\sqrt{n}\sqrt{k})$ when the number of wildcards is bounded by $k$ for $k \geq \sqrt{n}$. This yields an algorithm that runs in time $\tilde O(\sqrt{n\max\{k,\sqrt{n}\}})$\footnote{It is easy to show that one can increase the number of wildcards without changing the nature of the problem. For instance, one can replace each character \textsf{x} of the two strings by two characters \$\textsf{x} in both strings where \$ is a special character that does not occur in any of the original strings. Now, one can arbitrarily turn each of the \$ characters of the pattern into a wildcard without changing the complexity of the problem as long as at least one \$ character remains intact.}.
\section{Preliminaries}
We denote the two strings by $A$ (representing the text) and $B$ (representing the pattern). We denote the size of $A$ by $n$ and the size of $B$ by $m$ and assume without loss of generality that $n \geq m$. Each string is a sequence of characters and wildcards that are indexed from 0. We denote the total number of wildcards in both $A$ and $B$ by $k$. Also, we refer to a wildcard character by `$\wildcard$' (without quotes).

We say two characters match, if either they are equal or one of them is a wildcard otherwise we call them a mismatch. We say a string matches another string if they have equal sizes and their characters match index by index. Also, we denote the number of mismatches between two strings $X$ and $Y$ of equal size as the number of $i$'s such that $X_i$ does not match $Y_i$. 

Finally, for a string $X$, we denote by $X[\alpha, \beta]$ a substring of $X$ that starts from index $\alpha$ and ends at index $\beta$.
\section{$\tilde O(\sqrt{n}\sqrt{k})$ Time Algorithm for $k \geq \sqrt{n}$}
In this section, we present a quantum algorithm for pattern matching with wildcards that runs in time $\tilde O(\sqrt{n}\sqrt{k})$ when the number of wildcards is at least $\sqrt{n}$. We make two assumptions here that we will address at the end of this section: (i) We assume that $n/2 < m \leq n$ and (ii) we assume that the value of $k$ is known to us in advance. Also, since quantum algorithms are inherently vulnerable to errors, whenever we say an algorithm can solve a problem in a given time, we mean that algorithm can solve the problem in that time with probability at least $1-n^{-c}$ for any constant $c$. Since we use the $\tilde O$ notation that suppresses the log factors, the constant $c$ of the success probability can be made arbitrarily large and thus the failure scenarios can be ignored.   

We start by defining a measurement of the strings that we call shifted matching. For a shift $1 \leq d < m$, define the \textit{shifted matching array} of $B$, denoted by $\shift(B,d)$ as a 0/1 array of size $m-d$ in the following way:
\begin{itemize}
	\item $\shift(B,d)_i = 0$ if $B_i = B_{i+d}$ and $B_i \neq \wildcard$.
	\item $\shift(B,d)_i = 1$ if either $B_i = \wildcard$ or $B_{i+d} = \wildcard$ or $B_i \neq B_{i+d}$.
\end{itemize}
We also refer to $\sum_{i=0}^{m-d-1} \shift(B,d)_i$ as the \textit{shifted matching sum} of $B$ with shift $d$. The shifted matching arrays and the shifted matching sums of $A$ are defined analogously. We consider two cases in our algorithm.
\begin{enumerate}[label=\arabic*.]
	\item For each $1 \leq d < k$, the shifted matching sum of $B$ with shift $d$ is at least $3k$.
	\item There exists a $1 \leq d < k$ such that the shifted matching sum of $B$ with shift $d$ is bounded by $6k$. 
\end{enumerate}
While the two cases are not necessarily disjoint, for any instance of the problem, at least one of the cases holds. We first state an observation from ~\cite{brassard2000quantum} that enables us to find which case holds for our problem instance in time $\tilde O(\sqrt{n})$ and then solve the problem for each case separately.

\begin{observation}[from ~\cite{brassard2000quantum}]\label{obs:sampling}
	Given a 0/1 array of length $\alpha$, one can determine in time $\tilde O(\sqrt{\alpha / \beta})$ whether the sum of the elements of the array is in range $[0, \beta]$ or in range $[2\beta, \alpha]$. If the answer is in range $(\beta, 2\beta)$ the output of the algorithm could be either case.
\end{observation}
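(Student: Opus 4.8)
The plan is to recognize this statement as an instance of quantum approximate counting and to derive it from the amplitude estimation primitive of \cite{brassard2000quantum}. Let $t = \sum_{i=0}^{\alpha-1} A_i$ denote the number of ones in the array $A$. First I would set up the standard amplitude-estimation configuration: prepare the uniform superposition $\frac{1}{\sqrt{\alpha}}\sum_{i} |i\rangle$ over the $\alpha$ indices, and use one array query to mark the indices with $A_i = 1$ as ``good''. The amplitude of the good subspace is then exactly $a = t/\alpha$, so estimating $a$ is equivalent to estimating $t$, and the task reduces to deciding whether $t \le \beta$ or $t \ge 2\beta$.

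Next I would run amplitude estimation with $M = c'\sqrt{\alpha/\beta}$ iterations for a sufficiently large constant $c'$, which costs $O(M) = O(\sqrt{\alpha/\beta})$ oracle queries. The guarantee of \cite{brassard2000quantum} yields an estimate $\tilde a$ with $|\tilde a - a| \le 2\pi \sqrt{a(1-a)}/M + \pi^2/M^2$ with probability at least $8/\pi^2$. Writing $\tilde t = \alpha \tilde a$ and using $\sqrt{a(1-a)} = \sqrt{t(\alpha-t)}/\alpha \le \sqrt{t/\alpha}$, this translates into an additive guarantee on the count, $|\tilde t - t| \le 2\pi \sqrt{t\beta}/c' + \pi^2 \beta / c'^2$. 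The decision rule is then to declare ``$t \le \beta$'' when $\tilde t < \tfrac{3}{2}\beta$ and ``$t \ge 2\beta$'' otherwise; note that we only need this rule to be correct when $t$ actually lies in $[0,\beta]$ or $[2\beta,\alpha]$, exactly as the statement permits.

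The step I expect to be the main obstacle is verifying that this single threshold separates the two regimes, because the error bound $\sqrt{t\beta}/c'$ grows with $t$ and so does not immediately give a uniform additive error smaller than $\beta/2$. For $t \le \beta$ this is harmless: there $\sqrt{t\beta}\le\beta$, so the error is $O(\beta/c') < \beta/2$ and $\tilde t < \tfrac{3}{2}\beta$. For $t \ge 2\beta$ I would instead argue via the lower bound $\tilde t \ge t - 2\pi\sqrt{t\beta}/c' - \pi^2\beta/c'^2$; treating the right-hand side as a function of $t$, its derivative $1 - \pi\sqrt{\beta}/(c'\sqrt{t})$ is positive on $[2\beta,\infty)$ once $c'$ is a fixed constant, so its minimum over this range occurs at the boundary $t = 2\beta$, where it evaluates to $\beta\bigl(2 - 2\sqrt2\,\pi/c' - \pi^2/c'^2\bigr) > \tfrac32\beta$ for $c'$ large enough. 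Thus the worst case sits exactly at the two boundaries, and a single large constant $c'$ handles both.

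Finally I would boost the success probability. Since each run is correct with probability at least $8/\pi^2 > 1/2$, repeating the whole procedure $O(\log \alpha)$ times and taking the majority vote drives the failure probability below any inverse polynomial by a Chernoff bound. This multiplies the query count by only a logarithmic factor, so the total running time remains $\tilde{O}(\sqrt{\alpha/\beta})$, as claimed.
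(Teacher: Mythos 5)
Your derivation is correct and is exactly the standard quantum counting argument that the paper delegates to its citation of Brassard et al.: the paper states this observation without proof, and your instantiation of amplitude estimation with $M = \Theta(\sqrt{\alpha/\beta})$, the threshold test at $\tfrac{3}{2}\beta$, and the majority-vote amplification is the intended justification. The careful check that the error term $2\pi\sqrt{t\beta}/c'$ still permits a single separating threshold when $t \ge 2\beta$ is the right detail to verify, and your monotonicity argument handles it correctly.
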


It follows from Observation~\ref{obs:sampling} that we can determine in time $\tilde O(\sqrt{n/k})$ whether for a given $d$, the shifted matching sum of $B$ with shift $d$ is bounded by $3k$. We can then use Grover's algorithm~\cite{grover1996fast} to find one $d$ in range $[1, k)$ such that this shifted sum is bounded by $3k$. If such a $d$ exists, then we can be sure that (incorporating the multiplicative error of the sampling algorithm) the shifted matching sum of $B$ with shift $d$ is certainly bounded by $6k$ (case 2). If no such $d$ exists, then we can be sure that for all $1 \leq d < k$, the shifted matching sum of $B$ with shift $d$ is more than $3k$ (case 1). Since the Grover's algorithm imposes a multiplicative overhead of $\tilde O(\sqrt{k})$ then the overall runtime of the algorithm would be $\tilde O(\sqrt{n})$.

Based on what discussed above, in the first step of our algorithm we spend time $\tilde O(\sqrt{n})$ to find out which case holds for our problem instance and in the second step of the algorithm we solve the problem for that case.

\subsection{Case 1}
In this case, we are sure that for any $1 \leq d < k$, the shifted matching sum of $B$ with shift $d$ is at least $3k$. The key observation that we prove in this section is the following: For each $0 \leq \alpha < \beta \leq n-m$ such that $\beta - \alpha < k$, either the number of mismatches between $A[\alpha, \alpha+m-1]$ and $B$ is at least $k/2$ or the number of mismatches between $A[\beta, \beta+m-1]$ and $B$ is at least $k/2$.

\begin{lemma}\label{lemma:1}
	For each $0 \leq \alpha < \beta \leq n-m$ such that $\beta - \alpha < k$, either the number of mismatches between $A[\alpha, \alpha+m-1]$ and $B$ is at least $k/2$ or the number of mismatches between $A[\beta, \beta+m-1]$ and $B$ is at least $k/2$.
\end{lemma}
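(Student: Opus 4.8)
The plan is to argue by contradiction. Set $d := \beta - \alpha$, so that $1 \le d < k$, and suppose that \emph{both} $A[\alpha, \alpha+m-1]$ and $A[\beta, \beta+m-1]$ have fewer than $k/2$ mismatches with $B$. I will derive a contradiction with the Case~1 hypothesis that the shifted matching sum of $B$ with shift $d$ is at least $3k$.

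First I would record the alignment identity that ties the two windows to the self-comparison of $B$. Writing $P := A[\alpha, \alpha+m-1]$ and $Q := A[\beta, \beta+m-1]$, the single text character $c_i := A_{\alpha+d+i}$ is compared against $B_{i+d}$ inside $P$ and against $B_i$ inside $Q$, for every $0 \le i \le m-d-1$. This is exactly the index range on which $\shift(B,d)$ is defined, which is what lets a large shifted matching sum force mismatches in $P$ or $Q$.

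Next, for each $i$ with $\shift(B,d)_i = 1$ I would show that, barring a wildcard, $c_i$ must mismatch at least one of $B_i$ or $B_{i+d}$. Indeed, if $c_i$ matched both while $\shift(B,d)_i = 1$, a short case analysis forces one of the three characters $A_{\alpha+d+i}$, $B_i$, $B_{i+d}$ to be a wildcard---the only other possibility, $B_i = B_{i+d} = c_i \ne \wildcard$, would give $\shift(B,d)_i = 0$. Call such an $i$ \emph{bad}. Each bad index can be charged to a distinct wildcard slot: a distinct position $\alpha+d+i$ of $A$, or a distinct position $i$ of $B$, or a distinct position $i+d$ of $B$. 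Summing over the three categories by a union bound yields at most $w_A + 2w_B \le 2k$ bad indices, where $w_A,w_B$ denote the number of wildcards in $A$ and $B$ and $w_A + w_B = k$.

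Finally I would count. Since the shifted matching sum is at least $3k$ and at most $2k$ of those indices are bad, at least $k$ indices are \emph{good}: for each such $i$, either $Q$ mismatches $B$ at position $i$ or $P$ mismatches $B$ at position $i+d$ (or both). As $i \mapsto i$ and $i \mapsto i+d$ are injective, distinct good indices yield distinct mismatch positions within each window, so the mismatch count of $P$ plus that of $Q$ is at least $k$; hence one of them is at least $k/2$, contradicting the assumption. The step I expect to demand the most care is the bad-index bookkeeping: one must check that the wildcard charged to a bad index is genuinely distinct within each of the three categories, so that the union bound cleanly gives $w_A + 2w_B$ rather than an over-count (the same $B$-wildcard may legitimately be charged once as a $B_i$ and once as a $B_{i+d}$, which is exactly why the factor $2$ on $w_B$ appears).
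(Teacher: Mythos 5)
Your proposal is correct and follows essentially the same argument as the paper: both set $d=\beta-\alpha$, observe that each index $i$ with $\shift(B,d)_i=1$ must be explained either by a wildcard (at most $2k$ such indices) or by a mismatch in one of the two windows, and conclude that the shifted matching sum would otherwise fall below $3k$. Your bad-index bookkeeping ($w_A+2w_B\le 2k$) is just a slightly more explicit version of the paper's wildcard count, so no substantive difference.
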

\begin{proof}
Assume for the sake of contradiction that the number of mismatches is smaller than $k/2$ in both cases. Define $d = \beta - \alpha$ and consider the shifted matching array of $B$ with shift $d$. For each $0 \leq i < m-d$ such that $\shift(B,d)_i = 1$ one of the following four cases can happen:
\begin{enumerate}[label=\arabic*.]
	\item  $B_i = \wildcard$ or $B_{i+d} = \wildcard$
	\item $A_{\beta+i} = A_{\alpha+i+d} = \wildcard$
	\item There is a mismatch between $A_{\alpha+d+i} = A_{\beta+i}$ and $B_i$. This will be counted as a mismatch between $A[\beta, \beta+m-1]$ and $B$. 
	\item There is a mismatch between $A_{\alpha+d+i} = A_{\beta+i}$ and $B_{i+d}$. This will be counted as a mismatch between $A[\alpha, \alpha+m-1]$ and $B$. 
\end{enumerate}
Since the total number of wildcards in both strings is bounded by $k$, then the total number of $i$'s for which either case 1 or case 2 happens is bounded by $2k$. Moreover, since the number of mismatches between $A[\alpha, \alpha+m-1]$  and $B$ is smaller than $k/2$ and the number of mismatches between $A[\beta, \beta+m-1]$ and $B$ is smaller than $k/2$, the total number of $i$'s for which either case 3 or case 4 happens is smaller than $k$. This implies that the shifted matching sum of $B$ with shift $d$ is smaller than $3k$ which contradicts our original assumption.
\end{proof}

Lemma~\ref{lemma:1} implies that for each interval $[\alpha, \beta]$ of $A$ such that $\beta - \alpha < k$, there is at most one $\alpha \leq i \leq \beta$ such that the number of mismatches between $A[i, i+m-1]$ and $B$ is smaller than $k/2$. In order to solve the problem, we divide the interval $[0, n-1]$ into $\lceil n/k \rceil$ intervals of size at most $k$. We then design an algorithm $f$ that takes an interval $[\alpha, \beta]$ as input such that $\beta - \alpha < k$ and finds out if there is a match between $A[i, i+m-1]$ and $B$ for some $\alpha \leq i \leq \beta$ in the following way:

For a given $\alpha \leq i \leq \beta$, we can run the sampling algorithm of Observation~\ref{obs:sampling} to find out if the number of mismatches between $A[i, i+m-1]$ and $B$ is smaller than $k/4$, or at least $k/2$ in time $\tilde O(\sqrt{n/k})$. We know that the latter is the case for at most one $i$ in range $[\alpha, \beta]$ due to Lemma~\ref{lemma:1}. Therefore, we can run the Grover's algorithm to find such an $i$ (if exists) in time $\tilde O(\sqrt{n})$. If no such $i$ exists, then there is no match between $A[i, i+m-1]$ and $B$ for any $\alpha \leq i \leq \beta$, otherwise let $i$ be the index of the only starting point in range $[\alpha, \beta]$ whose number of mismatches with $B$ is smaller than $k/2$. We can run a Grover's algorithm in time $O(\sqrt{n})$ to find out if $A[i, i+m-1]$ matches with $B$. Thus, the overall runtime for the entire interval is $\tilde O(\sqrt{n})$. Finally, we run another Grover's search over all intervals to find out if there is a solution in any of the intervals. Since the number of intervals is $O(n/k)$, the overhead of the Grover's algorithm is $\tilde O(\sqrt{n/k})$ and therefore the runtime of the algorithm is $\tilde O(n/\sqrt{k})$  in total which is bounded by $\tilde O(\sqrt{n}\sqrt{k})$ for $k \geq \sqrt{n}$.

\subsection{Case 2}
In this case, we know that for a given $1 \leq d < k$, the shifted matching sum of $B$ with shift $d$ is bounded by $6k$. Also, such a $d$ is provided to us.  In what follows, we show two key properties of the shifted matching array of $A$ that play important roles in our algorithm.

\begin{lemma}\label{lemma:2}
	Let for some $0 \leq i \leq n-m$, $A[i, i+m-1]$ match with $B$. Then we have $\sum_{j=i}^{i+m-d-1}\shift(A, d)_j \leq 8k$.
\end{lemma}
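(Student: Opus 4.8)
The plan is to compare the shifted matching array of the aligned window $A[i, i+m-1]$ against the shifted matching array of $B$, entry by entry, and to charge every nonzero entry of the former either to a nonzero entry of the latter or to a wildcard of $A$. Reindexing by $\ell = j - i$, the quantity to bound is $\sum_{\ell=0}^{m-d-1} \shift(A,d)_{i+\ell}$, and the entry $\shift(A,d)_{i+\ell}$ depends only on the pair $(A_{i+\ell}, A_{i+\ell+d})$. I would split the index set $\{0, \ldots, m-d-1\}$ according to whether $\shift(B,d)_\ell = 1$ or $\shift(B,d)_\ell = 0$ and bound the contribution of each part separately.

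For the positions with $\shift(B,d)_\ell = 1$ the bound is immediate: there are exactly $\sum_{\ell} \shift(B,d)_\ell$ such indices, which is the shifted matching sum of $B$ with shift $d$ and hence at most $6k$ by the hypothesis of Case~2; since each entry of a $0/1$ array is at most $1$, these positions contribute at most $6k$ to the sum.

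The main work is in the positions with $\shift(B,d)_\ell = 0$, which by definition means $B_\ell = B_{\ell+d}$ and $B_\ell \neq \wildcard$; call this common character $c$. Here I would invoke the match between $A[i, i+m-1]$ and $B$: since $A_{i+\ell}$ matches $B_\ell = c$ and $A_{i+\ell+d}$ matches $B_{\ell+d} = c$ with $c$ not a wildcard, each of $A_{i+\ell}$ and $A_{i+\ell+d}$ is either equal to $c$ or is itself a wildcard. A short case analysis then shows that if neither of these two characters of $A$ is a wildcard, they both equal $c$ and hence $\shift(A,d)_{i+\ell} = 0$; consequently $\shift(A,d)_{i+\ell} = 1$ can occur only when $A_{i+\ell} = \wildcard$ or $A_{i+\ell+d} = \wildcard$. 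The final step is a counting argument: a single wildcard of $A$ at position $p$ can force such a flip in at most two ways (as $A_{i+\ell}$ with $\ell = p-i$, or as $A_{i+\ell+d}$ with $\ell = p-i-d$), and there are at most $k$ wildcards in $A$ in total, so these positions contribute at most $2k$. Adding the two parts yields $6k + 2k = 8k$, as claimed. The only delicate point I anticipate is bookkeeping the factor of two in the wildcard count and keeping the index ranges consistent; the case analysis itself is routine once the match hypothesis is applied.
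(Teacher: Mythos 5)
Your proof is correct and is essentially the paper's argument run in the contrapositive direction: the paper assumes the sum exceeds $8k$, discards at most $2k$ wildcard-induced ones, and shows each remaining genuine difference $A_{i+j}\neq A_{i+j+d}$ forces $\shift(B,d)_j=1$, contradicting the $6k$ bound, whereas you partition directly by the value of $\shift(B,d)_\ell$ and obtain the same $6k+2k$ accounting. The underlying facts used are identical, so there is nothing substantively new or missing here.
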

\begin{proof}
	Assume for the sake of contradiction that $A[i, i+m-1]$ matches with $B$ for some $0 \leq i \leq n-m$ and also $\sum_{j=i}^{i+m-d-1}\shift(A, d)_j > 8k$ holds. Out of the 1s in the shifted matching array of $A$ with shift $d$ at most $2k$ of them correspond to wildcards and the rest are due to having different characters. Thus, there are at least $6k+1$ distinct values for $j$ in range $[0, m-d-1]$ such that $A_{i+j} \neq A_{i+j+d}$ and neither character is a wildcard. In order for $A[i, i+m-1]$ to match with $B$ we have to have one of the following for each such $j$:
	\begin{itemize}
		\item $B_j \neq B_{j+d}$
		\item $B_j = \wildcard$
		\item $B_{j+d} = \wildcard$
	\end{itemize}
	Notice each of the above cases implies $\shift(B, d)_j = 1$ which means the shifted matching sum of $B$ with shift $d$ is more than $6k$. This contradicts our original assumption.
\end{proof}

\begin{lemma}\label{lemma:3}
	For an $0 \leq i \leq n-m$, $A[i, i+m-1]$ matches with $B$ if and only if all of the following hold:
	\begin{itemize}
		\item For each $0 \leq j < d$,  $A_{i+j}$ matches with $B_j$.
		\item For each $0 \leq j < m-d$ such that $\shift(A,d)_{i+j} =1$,  $A_{i+j}$ matches with $B_j$ and  $A_{i+j+d}$ matches with $B_{j+d}$.
		\item For each $0 \leq j < m-d$ such that $\shift(B,d)_j =1$,  $A_{i+j}$ matches with $B_j$ and $A_{i+j+d}$ matches with $B_{j+d}$.
	\end{itemize}
\end{lemma}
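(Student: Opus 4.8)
The plan is to treat the two directions asymmetrically, since the forward implication is immediate and essentially all the content lives in the reverse implication. For the ``only if'' direction, if $A[i,i+m-1]$ matches $B$ then $A_{i+\ell}$ matches $B_\ell$ for every $0 \le \ell \le m-1$; each of the three bulleted conditions merely asserts that some specified subcollection of these index-pairs match, so all three hold trivially. Hence I would dispose of this direction in one sentence and spend the rest of the argument on the reverse implication: assuming the three conditions, show that $A_{i+\ell}$ matches $B_\ell$ for \emph{every} $0 \le \ell \le m-1$.

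For the ``if'' direction the key idea is a chaining (propagation) argument along arithmetic progressions of common difference $d$. I would partition the index set $\{0,\dots,m-1\}$ into the $d$ residue classes modulo $d$; a generic class is a chain $r, r+d, r+2d, \dots$ with $0 \le r < d$. The first bullet handles all the base cases at once: for $r < d$ it guarantees $A_{i+r}$ matches $B_r$. I then propagate the match up each chain by induction: assuming $A_{i+j}$ matches $B_j$ for a chain element $j$ with $j + d \le m-1$ (so in particular $j < m-d$), I want to conclude $A_{i+j+d}$ matches $B_{j+d}$.

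The propagation step splits into two cases according to the shifted matching arrays at position $j$. If $\shift(A,d)_{i+j} = 1$ or $\shift(B,d)_j = 1$, then the second or third bullet applies directly and already asserts that $A_{i+j+d}$ matches $B_{j+d}$, so there is nothing to prove. The interesting case is $\shift(A,d)_{i+j} = 0$ and $\shift(B,d)_j = 0$: by the definition of the shifted matching array this means $A_{i+j} = A_{i+j+d}$ with $A_{i+j} \neq \wildcard$ and $B_j = B_{j+d}$ with $B_j \neq \wildcard$. Since neither $A_{i+j}$ nor $B_j$ is a wildcard, the inductive hypothesis that they match forces the \emph{equality} $A_{i+j} = B_j$, and then $A_{i+j+d} = A_{i+j} = B_j = B_{j+d}$, so $A_{i+j+d}$ and $B_{j+d}$ are equal and in particular match. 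Either way the match propagates one step further, and induction along each chain together with the union over all $d$ residue classes covers every index $0 \le \ell \le m-1$, which is exactly the statement that $A[i,i+m-1]$ matches $B$.

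The main obstacle I anticipate is not any single computation but getting the bookkeeping of the propagation exactly right: one must verify that the residue-class base cases are precisely the indices $r < d$ covered by the first bullet, that every non-base chain element arises as $j+d$ for some $j < m-d$ so that the second and third bullets are genuinely applicable, and that the ``both shifts zero'' case lets equality (not merely matching) propagate across the shift. This last point is where the non-wildcard guarantees built into the condition $\shift(\cdot,d)_{\cdot}=0$ are essential, since matching upgrades to equality only in the absence of wildcards.
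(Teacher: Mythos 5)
Your proof is correct and is essentially the paper's argument: the paper picks the smallest mismatching index $j$, notes $j \ge d$, and derives a contradiction from the fact that $A_{i+j-d}$ matches $B_{j-d}$ while both $\shift(A,d)_{i+j-d}$ and $\shift(B,d)_{j-d}$ would have to be $0$ --- which is exactly the contrapositive of your propagation step. Your forward induction along residue classes mod $d$, with the ``both shifts zero forces equality to carry over'' case, is the same mechanism phrased constructively rather than by minimal counterexample.
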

\begin{proof}
	It follows from definition that if one of the conditions does not hold, then $A[i, i+m-1]$ does not match with $B$. We show here that if all of the above conditions hold, then $A[i, i+m-1]$ must match with $B$. Assume for the sake of contradiction all of the above hold but there is a mismatch between $A[i, i+m-1]$ and $B$. Let $j$ be the smallest index for which $A_{i+j}$ and $B_j$ do not match. Neither character can be a wildcard here and also $j \geq d$ should hold otherwise the first condition of the lemma fails. Since $j$ is the smallest such element, we know that $A_{i+j-d}$ matches with $B_{j-d}$. This implies that one of $\shift(A,d)_{i+j-d}$ or $\shift(B,d)_{j-d}$ should be equal to $1$ which means that one of the conditions of the lemma should fail. This is contradiction and thus the lemma holds.
\end{proof}

Lemmas~\ref{lemma:2} and~\ref{lemma:3} give us a convenient tool to find out if $A[i,i+m-1]$ matches with $B$ for any $0 \leq i \leq n-m$. Recall our assumption in the beginning of the section that $n/2 < m \leq n$ holds. This means that  $A_{m-1}$ is present in any interval of size $m$ of $A$. We define $\beta$ as the last element of $A$ that can potentially contribute to a match between $A$ and $B$ according to Lemma~\ref{lemma:2}. To this end, we list up to $8k+1$ indices $i \geq m-1$ of $A$ such that $\shift(A,d)_i = 1$. If fewer than $8k+1$ such indices exists we set $\beta = n-1$, otherwise we list the first $8k+1$ such elements and set $\beta$ equal to the index of the $8k+1$'th such element minus one plus $d$.  Similarly, we define $\alpha$ as the smallest element of $A$ that can contribute to a match from $A$ to $B$ according to Lemma~\ref{lemma:3}. To this end, we list up to $8k+1$ indices $i \leq m-1$ such that $\shift(A,d)_i = 1$. If fewer than $8k+1$ such elements exist then we set $\alpha = 0$, otherwise we list the largest $8k+1$ of those elements and  set $\alpha$ equal  to the index of the smallest such index plus one. It follows from Lemma~\ref{lemma:2} that any match from $A$ to $B$ should only include elements of $A[\alpha, \beta]$ and moreover, we have already listed all elements $i$ in range $[\alpha, \beta-d]$ such that $\shift(A,d)_i = 1$. This process takes time $\tilde O(\sqrt{n} \sqrt{k})$ via the element listing algorithm~\cite{grover1996fast}. We similarly, list all elements of the shifted matching array of $B$ with shift $d$ that are equal to 1. Since their count is bounded by $6k$, the overall runtime would be bounded by $\tilde O(\sqrt{n} \sqrt{k})$. At this point, for each index $\alpha \leq i \leq \beta-m$ we can use Lemma~\ref{lemma:3} to find out if $A[i,i+m-1]$ matches with $B$ in time $\tilde O(\sqrt{k})$ using the Grover's algorithm. Thus, if we run another Grover's algorithm to find out if such a match exists for any  $\alpha \leq i < \beta-m+1$, the overall runtime would be bounded by $\tilde O(\sqrt{n} \sqrt{k})$.

\subsection{Algorithm}
We discussed previously that we consider two cases and solve each case separately in time $\tilde O(\sqrt{n}\sqrt{k})$. Here, we address the two assumptions we made earlier. The first assumption is regarding the value of $k$. Although $k$ is not known to us in advance, we can approximate it in time $\tilde O(\sqrt{n})$ within a multiplicative factor of $2$. In other words, we can determine in time $\tilde O(\sqrt{n})$  via Observation~\ref{obs:sampling} a value $k'$ such that $k \leq k' \leq 2k$ holds with high probability. Notice that all the above arguments continue to hold if we use an upper bound $k'$ instead of the exact value $k$ in our algorithm. The only impact of this change is that the runtime would grow to $\tilde O(\sqrt{n}\sqrt{k'})$ which is asymptotically equal to  $\tilde O(\sqrt{n}\sqrt{k})$.

To address the assumption regarding the sizes of $n$ and $m$, we do the following: For cases that $m \leq n/2$, we construct $\lceil n/m \rceil$ instances of the problem with strings $A^i$ and $B^i$ for each instance $1 \leq i \leq \lceil n/m \rceil$ in the following way:
\begin{itemize}
	\item $B^i = B$ for all instances.
	\item $A^i = A[(i-1)m, (i-1)m+2m-2]$ for $i < \lceil n/m \rceil$.
	\item $A^i = A[n-(2m-1), n-1]$ for $i = \lceil n/m \rceil$.
\end{itemize}
Since each interval of size $m$ of $A$ appears in at least one of the $A^i$'s, then if there is a match between $A$ and $B$, there is certainly a match in one of the problem instance as well. Moreover, the size of the strings in each problem instance is $O(m)$ and they also meet the assumption $|A^i|/2 < |B^i| \leq |A^i|$ and therefore each instance can be solved in time $\tilde O(\sqrt{m}\sqrt{k})$. Therefore, if we run a Grover's search over all instances, the overall runtime would be equal to $\tilde O(\sqrt{n/m}\sqrt{m}\sqrt{k}) = \tilde O(\sqrt{n}\sqrt{k})$.

\begin{theorem}
	There is a quantum algorithm for pattern matching with wildcards that runs in time $\tilde O(\sqrt{n}\sqrt{k})$ and succeeds with high probability when the number of wildcards is at least $\sqrt{n}$.
\end{theorem}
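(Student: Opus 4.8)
The plan is to assemble the final theorem by combining the three technical lemmas with the case analysis and the two reductions that remove the simplifying assumptions, rather than proving anything fundamentally new. The statement simply packages everything established in the section, so my proof will be a bookkeeping argument that verifies each piece runs within the claimed $\tilde O(\sqrt{n}\sqrt{k})$ budget. First I would invoke Observation~\ref{obs:sampling} together with Grover's algorithm to distinguish, in time $\tilde O(\sqrt{n})$, whether the instance satisfies Case~1 (every shift $1 \leq d < k$ has shifted matching sum of $B$ at least $3k$) or Case~2 (some shift $d$ has shifted matching sum of $B$ at most $6k$), noting that at least one case always holds and that the sampling routine's multiplicative slack is absorbed by the constants $3k$ and $6k$.

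Next I would dispatch each case to the machinery already built. For Case~1, Lemma~\ref{lemma:1} guarantees that any window of $k$ consecutive starting positions contains at most one candidate with fewer than $k/2$ mismatches, so partitioning $[0,n-1]$ into $\lceil n/k\rceil$ intervals and running nested Grover searches (the inner sampling step costing $\tilde O(\sqrt{n/k})$ per position, verified by a final $O(\sqrt{n})$ Grover check) yields total time $\tilde O(n/\sqrt{k})$, which is $\tilde O(\sqrt{n}\sqrt{k})$ precisely when $k \geq \sqrt{n}$. For Case~2, with the witnessing shift $d$ in hand, Lemma~\ref{lemma:2} confines every possible match to the window $A[\alpha,\beta]$ and bounds the relevant number of $1$'s in $\shift(A,d)$ by $O(k)$, while Lemma~\ref{lemma:3} reduces checking each alignment to examining only those $O(k)$ flagged positions; the element-listing routine and the Grover searches over positions and over flagged indices each stay within $\tilde O(\sqrt{n}\sqrt{k})$.

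Finally I would discharge the two assumptions from the start of the section. For the unknown value of $k$, I would cite the $\tilde O(\sqrt{n})$ approximation via Observation~\ref{obs:sampling} that produces $k'$ with $k \leq k' \leq 2k$ with high probability, and observe that every threshold argument above uses only an upper bound on $k$, so substituting $k'$ preserves correctness and inflates the runtime only to $\tilde O(\sqrt{n}\sqrt{k'}) = \tilde O(\sqrt{n}\sqrt{k})$. For the size assumption $n/2 < m \leq n$, I would invoke the reduction that splits a small-pattern instance into $\lceil n/m\rceil$ subinstances $(A^i,B^i)$, each of length $O(m)$ and each satisfying $|A^i|/2 < |B^i| \leq |A^i|$, so that a match in $A$ surfaces in some subinstance; solving each in $\tilde O(\sqrt{m}\sqrt{k})$ and running one more Grover search over the $O(n/m)$ subinstances gives $\tilde O(\sqrt{n/m}\cdot\sqrt{m}\sqrt{k}) = \tilde O(\sqrt{n}\sqrt{k})$.

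The step I expect to be the only real subtlety is controlling how the repeated, nested applications of Grover's algorithm and the approximate sampling of Observation~\ref{obs:sampling} interact with the high-probability guarantee. Because each Grover layer and each sampling call carries its own failure probability, I would need to confirm that taking each subroutine's success probability to be $1 - n^{-c'}$ for a suitably large constant $c'$ and union-bounding over the polynomially many invocations still yields overall success probability $1 - n^{-c}$; this is exactly the point the section flags when it notes that the $\tilde O$ notation lets us push the success constant arbitrarily high, so the argument is routine once stated carefully. Everything else is a direct citation of the lemmas and the two reductions already in hand.
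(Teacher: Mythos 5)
Your proposal is correct and follows essentially the same route as the paper: the case dichotomy detected via Observation~\ref{obs:sampling} plus Grover, Lemma~\ref{lemma:1} for Case~1, Lemmas~\ref{lemma:2} and~\ref{lemma:3} with element listing for Case~2, and the two reductions for unknown $k$ and for $m \le n/2$, with the runtime bookkeeping matching the paper's. Your added remark about union-bounding the failure probabilities of the nested Grover and sampling calls is exactly the point the paper disposes of with its opening convention on success probability, so nothing further is needed.
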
 

\bibliographystyle{plainnat}
\bibliography{draft}
\appendix

\end{document}